\definecolor{halfgray}{gray}{0.55}
\definecolor{webgreen}{rgb}{0,.5,0}
\definecolor{webbrown}{rgb}{.6,0,0}
\definecolor{Maroon}{cmyk}{0, 0.87, 0.68, 0.32}
\definecolor{RoyalBlue}{cmyk}{1, 0.50, 0, 0}
\definecolor{Black}{cmyk}{0, 0, 0, 0}
\newcommand{\mb}{\mathbf}
\newcommand{\txs}{\textstyle}
\newcommand{\insum}{\sum\nolimits}
\newcommand{\pd}{\partial}
\newcommand{\simplex}{\Delta}
\newcommand{\set}{\mathcal{S}}
\newcommand{\play}{\mathcal{K}}
\newcommand{\act}{\mathcal{A}}
\newcommand{\strat}{\Delta}
\newcommand{\graph}{\mathcal{G}}
\newcommand{\edges}{\mathcal{E}}
\newcommand{\nodes}{\mathcal{V}}
\newcommand{\game}{\mathfrak{G}}
\newcommand{\eq}{\strat^{\!*}}
\newcommand{\cone}{T^{c}}
\newcommand{\R}{{\mathbb R}}
\DeclareMathOperator{\exclude}{\setminus}
\DeclareMathOperator{\tr}{tr}
\DeclareMathOperator{\Int}{Int}
\DeclareMathOperator{\ind}{ind}
\DeclareMathOperator{\eig}{eig}
\DeclareMathOperator{\rank}{rank}
\DeclareMathOperator{\supp}{supp}
\newcommand{\attn}[1]{{ #1}}
\newcommand{\negspace}{\!\!\!}
\newlength{\wideword}
\newlength{\leftover}
\theoremstyle{plain}
\newtheorem{theorem}{Theorem}
\newtheorem{corollary}[theorem]{Corollary}
\newtheorem*{corollary*}{Corollary}
\newtheorem{lemma}[theorem]{Lemma}
\newtheorem{proposition}[theorem]{Proposition}
\theoremstyle{definition}
\newtheorem{definition}[theorem]{Definition}
\newtheorem*{definition*}{Definition}
\theoremstyle{remark}
\newtheorem{remark}{Remark}
\newtheorem*{remark*}{Remark}
\newtheorem{example}{Example}
\begin{document}
\normalem


\title{Dynamic Power Allocation Games in\\Parallel Multiple Access Channels}

\date{\today}


\author[P.~ Mertikopoulos]{Panayotis Mertikopoulos}
\address{
Department of Economics\\
\'Ecole Polytechnique\\
91128 Palaiseau, France}
\email{panayotis.mertikopoulos@polytechnique.edu}

\author[E.~V.~Belmega]{Elena V.~Belmega}
\address{
Laboratoire des Signaux et Syst\`emes (LSS)\\
CNRS, Sup\'elec, Univ. Paris Sud 11\\
Palaiseau, France}
\email{belmega@lss.supelec.fr}
\thanks{The work of this author was partially supported by the French L'Or\'eal program ``For young women doctoral candidates in science'' 2009.}

\author[A.~L.~Moustakas]{Aris L.~Moustakas}
\address{
Department of Physics\\
University of Athens\\
15784, Athens, Greece}
\email{arislm@phys.uoa.gr}

\author[S.~Lasaulce]{Samson Lasaulce}
\address{
Laboratoire des Signaux et Syst\`emes (LSS)\\
CNRS, Sup\'elec, Univ. Paris Sud 11\\
Palaiseau, France
}
\email{lasaulce@lss.supelec.fr}

\maketitle

\begin{abstract}
We analyze the distributed power allocation problem in parallel multiple access channels (MAC) \attn{by studying an associated non-cooperative game which admits an exact potential function}. Even though games of this type have been the subject of considerable study in the literature \cite{SPB08i-sp,SPB08ii-sp,SPB08-it,SPB08-jsac}, we find that the sufficient conditions which ensure uniqueness of Nash equilibrium points typically do not hold in this context. Nonetheless, we show that the parallel MAC game admits a unique equilibrium almost surely, thus establishing an important class of counterexamples where these sufficient conditions are not necessary. Furthermore, if the network's users employ a distributed learning scheme based on the replicator dynamics, we show that they converge to equilibrium from almost any initial condition, even though users only have local information at their disposal.
\end{abstract}

\section{Introduction}
\label{sec:introduction}

As a result of the massive scale at which wireless networks are deployed and operate, non-cooperative game theory is rapidly becoming one of the main tools with which to describe and analyze \attn{distributed} resource allocation problems in this context. The reason for this is simple: whereas solution concepts and centralized optimization protocols which depend on global information are very hard to justify or implement (especially in real time or in the presence of a large number of users), game theory offers a way to look at the problem from a more distributed and localized point of view which is often of great applicational relevance.

A prime example of this can be seen in the huge corpus of literature surrounding power allocation games in \attn{static} Gaussian multi-user networks with the objective of reaching a Shannon-efficient state. The common characteristic of all these games is that the interference between multiple transmissions gives rise to non-trivial interactions between transmitters and imposes a bottleneck on the network performance: interference forces the power allocation policy of one user to depend on the power allocations of all other users. So, following \cite{lasaulce-spm-2009}, and given that the network users are left to optimally manage their resources on their own, the main questions that arise are
\begin{inparaenum}[\itshape a\upshape)]
\item whether there exist ``equilibrial'' allocations which are stable against unilateral deviations;
\item whether these (Nash) equilibria are unique; and
\item whether these equilibria can be reached by distributed (learning) algorithms which require only local information.
\end{inparaenum}

The two most important multi-user network models that have been studied from this perspective are the interference channel (IC) \cite{carleial-it-1978} and the multiple access channel (MAC) \cite{cover-book-2006}, two models which are inherently different from a communications point of view. On the one hand, the IC is composed of several non-cooperative trans\-mitter-receiver pairs and the information-theoretic capacity region is still an open issue for this channel model; in fact, even in the simple case of single-input, single-output (SISO) two-user Gaussian IC only the achievable rates are known \cite{han-it-1981,carleial-it-1978,sato-it-1981}. On the other hand, the MAC is composed of several transmitters and a single receiver which must decode the incoming messages, and its capacity region is relatively well-understood \cite{cover-book-1975,wyner-it-1974,cheng-it-1993}, \attn{something which remains an open problem for the IC}.

Perhaps the most general non-cooperative power allocation games \attn{studied in the context of static channels} are those presented by Scutari et al. in a series of seminal papers \cite{SPB08i-sp,SPB08ii-sp,SPB08-it,SPB08-jsac} focusing on the static Gaussian IC where receivers  employ the single-user decoding (SUD) scheme which treats incoming signals from other users as additive noise. There, the existence of a Nash equilibrium \attn{(in the ``pure'' sense of Rosen)} is a consequence of the convexity properties of the users' achievable rates and follows directly from Theorem 1 in \cite{Ro65}; in fact, under suitable (but stringent) conditions on the channel matrices, this equilibrium solution is unique.


Unfortunately, there are two issues with the approach of Scutari et al.: first, as the authors themselves admit, these sufficient conditions ``may not be easy to check'' \cite[p.~1925]{SPB09-sp} and, indeed, in most cases they are not (calculating the spectral radius of a matrix is very hard for large matrices). Secondly, these conditions are not necessary, so when they fail, the uniqueness issue is left wide open. In the specific case of two-user \attn{parallel} IC, some progress has been made in \cite{belmega-gamenets-2009}, where the authors completely characterize the set of Nash equilibria. Depending on the geometric properties of the best-response functions (which are identical to the water-filling operators of \cite{SPB08-it}), the power allocation game may have one, two, three or an infinite number of Nash equilibria.
Finally, in \cite{mochaourab-valuetools-2009}, assuming that the interference links in one of the bands are negligible, the game is shown to have strategic complementarities and the Nash set is studied using the super-modular property of the game.

A most interesting special case of these more general games consists of the parallel MAC power allocation games which are used to model uplink communication in multi-cellular wireless networks composed of several \attn{{\em nodes} (receivers, access points, base stations, etc.)} that operate in orthogonal frequency bands. From a mathematical point of view, the results of \cite{SPB08-it} obviously apply to the MAC as well, but, as we shall see, the sufficient conditions of \cite{SPB08-jsac} are never met in the \attn{parallel} MAC case, making them irrelevant to games of this type. To compensate for this, the authors of \cite{PBLD09} considered two different power allocation games in parallel multiple access channels, depending on the users' action sets:
\begin{inparaenum}[\itshape i\upshape)]
\item the users may distribute their available power among the \attn{wireless nodes}; or 
\item the users simply choose a node.
\end{inparaenum}
There, for the first game (which is more relevant for realistic power allocation scenaria), the Nash equilibrium is argued to be unique, but the proof provided in \cite{PBLD09} actually holds only under very restrictive conditions (otherwise, the authors' strict convexity arguments break down).

\smallskip

In this paper, we analyze non-co\-o\-pe\-ra\-tive power allocation games in parallel multiple access channels with the standard assumption of single user decoding (SUD)  at the receiver.\footnote{\attn{More efficient decoding techniques such as successive interference cancellation can also be considered \cite{belmega-springer-2010,belmega-twc-2009}, but optimality with respect to Shannon achievable rates will not concern us here; instead, the low level of signalling and decoder complexity of the SUD makes it more suitable for learning purposes. Furthermore, when using successive interference techniques, the exact potential property of the game is lost in general.}}
As in the more general MIMO MAC case, the parallel MAC game admits an exact potential (in the sense of \cite{MS96}) \attn{whose extrema correspond to the system's sum capacity, and which can also be interpreted as the system achievable sum-rate if users were employing successive interference cancellation (SIC)}. Since this potential function is convex, the game's Nash equilibria will correspond to the minima of the potential, so the game's Nash set is necessarily convex and compact. However, we find that the game's potential is, in general, {\em not} strictly convex (this was the mistake of \cite{PBLD09}), so one would expect that uniqueness of Nash equilibria fails along with the sufficient conditions of \cite{SPB08-jsac}. Rather surprisingly, we find that this is not the case: {\em even though the conditions of \cite{SPB08-jsac} do not hold, the Nash equilibrium of the game is unique (a.s.)}.


As far as convergence to equilibrium is concerned, one of the main results of \cite{yu-it-2004} is that if the transmitters know the local channel state and the overall interference-plus-noise covariance matrix, then the iterative sequential water-filling algorithm converges to the set of equilibrium points. On the other hand, asynchronous water-filling is harder to analyze because the sufficient conditions of \cite{SPB08-it} are typically not satisfied in the parallel MAC case. Finally, in a setting similar to our own (incorporating pricing but restricted to only one receiver), the authors of \cite{ABSA02} have considered update  algorithms which converge to equilibrium modulo certain conditions which do not always hold either.

Instead of taking a water-filling approach, we present a learning scheme based on the replicator dynamics of evolutionary game theory \cite{We95} which only requires the players to know their channel coefficients and their rates. Dynamics of this sort have been studied extensively in finite Nash games (\attn{that is, games with multilinear payoff functions over a strategy space which is a product of simplices} \textendash\ see e.g. \cite{FL98} for a survey) and in {\em continuous} population games \cite{We95,Sa01}, but, in the case of finite {\em nonlinear} games (such as the one we have here), their properties are not as well understood. The first step in that direction \attn{consists of identifying the correct modified version of the users' payoff functions which allows the replicator dynamics to behave well with respect to the solution concepts of the underlying game \textendash\ in more ``traditional'' finite player games, this purpose is served by the payoffs that correspond to the pure strategies of the game, but here we have no such structure.} Our main contribution is to then show that in parallel MAC power allocation games, the replicator dynamics converge to an equilibrium point unconditionally, {\em even in the zero-probability event where the game has multiple equilibria}.


\subsection*{Notational Conventions}
Throughout this paper, we will use bold uppercase letters to denote matrices and a dagger ``\dag'' to denote the Hermitian transpose of a complex matrix.

If $\set = \{s_{\alpha}\}_{\alpha=1}^{n}$ is a finite set, we will denote by $K\set$ the disjoint union (categorical coproduct) $K\set\equiv \coprod_{k=1}^{K}\set$ of $K$ copies of $\set$. Also, recall that the (real) vector space spanned by $\set$ is defined as the space $\R^{\set}\equiv\textrm{Hom}(\set,\R)$ of functions $x:\set\to\R$, equipped with the usual operations of addition and scalar multiplication of functions. The canonical basis $\{e_{\alpha}\}_{\alpha=1}^{n}$ of $\R^{\set}$ then consists of the indicator functions $e_{\alpha}:\set\to\R$ which take the value $e_{\alpha}(s_{\alpha})=1$ and vanish otherwise. Hence, under the natural identification $s_{\alpha}\mapsto e_{\alpha}$, we will use the index $\alpha$ to refer interchangeably to either $s_{\alpha}$ or $e_{\alpha}$, depending on the context. Similarly, we will also identify the set $\simplex(\set)$ of probability measures on $\set$ with the standard $\mbox{(n-1)}$-dimensional simplex of $\R^{\set}$: $\simplex(\set) \equiv \{x\in\R^{\set}: \insum_{\alpha} x_{\alpha} =1 \text{ and } x_{\alpha}\geq0\}$.

Finally, as far as players and their strategies are concerned, we will consistently employ Latin indices for players ($k,\ell,\ldots$), while reserving Greek ones for their (``pure'') strategies ($\alpha,\beta,\ldots$).

\section{The System Model}
\label{sec:model}

Following \cite{PBLD09}, the basic setup of our model is as follows: we have a set $\play=\ito{K}$ of finitely many wireless (single-antenna) transmitters \textendash\ the {\em players} of the game \textendash\ that wish to connect to a network of wireless nodes $\act=\ito{A}$ (for instance, a collection of base stations or access points). For simplicity, we are assuming that these nodes operate at distinct, non-interfering frequency bands, so that a user $k\in\play$ may split his transmitting power among the nodes $\alpha\in\act$ subject to the power constraint:
\begin{equation}
\label{eq:powerconstraint}
\insum_{\alpha} p_{k\alpha} \leq P_{k},
\end{equation}
where $p_{k\alpha}$ is the power with which user $k$ transmits towards node $\alpha$ and $P_{k}$ is the user's maximum transmitting power. As a result, the {\em power allocation} of the $k$-th user will be represented by the point $p_{k} = \sum_{\alpha} p_{k\alpha} e_{\alpha}\in\R^{\act}$, while, in obvious notation, the corresponding {\em power profile} which collectively reflects all of the users' power allocations will be represented by $p = (p_{1},\dotsc,p_{K})\in\R^{K\act}$.

Thus, under the standard assumption of single user decoding (SUD), the spectral efficiency of user $k$ in the power profile $p$ will be given by \cite{PBLD09,ABSA02}:
\begin{equation}
\label{eq:payoff}
u_{k}(p)
=\sum_{\alpha\in\act} u_{k\alpha}(p)
=\sum_{\alpha\in\act} b_{\alpha} \log\left(1 + \frac{g_{k\alpha}p_{k\alpha}}{\sigma_{\alpha}^{2} + \sum_{\ell\neq k} g_{\ell\alpha} p_{\ell\alpha}}\right),
\end{equation}
where:
\begin{enumerate}
\item $b_{\alpha}=B_{\alpha}/B>0$ is a normalized version of the bandwidth $B_{\alpha}$ of the node $\alpha\in\act$, rescaled to unity by the total bandwidth factor $B=\sum_{\alpha}B_{\alpha}$.
\item $g_{k\alpha}>0$ is the channel gain of user $k$ with respect to node $\alpha$, assumed here to be static for the duration of the transmission, known to user $k$, and drawn from a continuous (and nonatomic) probability distribution on the positive real numbers \textendash\ see also the relevant assumptions in \cite{SPB08-it,SPB09-sp}.
\item $\sigma_{\alpha}^{2}>0$ represents the noise level associated to node $\alpha$ (typically the variance of a Gaussian noise process).
\end{enumerate}

\begin{remark}
It should be noted here that when the wireless users are spatially distributed, the set $\act$ of wireless nodes need not be common to all users.
As it turns out, it is not too hard to extend our analysis and results to this more general case, but, to keep our presentation as clear as possible, we will only consider the case where every user can reach every node.
\end{remark}

\begin{remark}
We should also stress here that the channel gain coefficients $g_{k\alpha}$ are the only stochastic parameters in our model, \attn{and, in our static channel setting, they are given by $g_{k\alpha} = |h_{k\alpha}|^{2}$, where $h$ is a realization of the continuous random matrix which describes the channel \textendash\ see also \cite{PBLD09}}. So, unless explicitly mentioned otherwise, any probabilistic statement we make in this paper will refer to the probability law of the random variables $g_{k\alpha}$.
\end{remark}

\setcounter{remark}{0}

Now, as intuition would suggest (and as was shown rigorously in \cite{PBLD09}), when the users' utility is based solely on their spectral efficiency (\ref{eq:payoff}), it is clearly to the users' best interest to transmit at the highest possible total power, i.e. satisfying (\ref{eq:powerconstraint}) as an equality.\footnote{Of course, this need not be true if the cost of power consumption is too high \cite{ABSA02}, but we will not deal with this issue here.} As a result, we obtain the following components of a normal form game $\game$:
\begin{enumerate}
\item The set of {\em players} of $\game$ is $\play = \ito{K}$.
\item The {\em strategy space} of player $k$ is the (scaled) simplex $\strat_{k} \equiv \{p_{k}\in\R^{\act}: p_{k\alpha}\geq 0 \text{ and } \sum_{\alpha} p_{k\alpha}=P_{k}\}$; as is customary, we will denote the game's space of strategy profiles $p=(p_{1},\dotsc,p_{K})$ by $\strat\equiv\prod_{k}\strat_{k}$.
\item The players' {\em payoffs} (or {\em utilities}) are given by the spectral efficiencies $u_{k}:\strat\to\R$ of (\ref{eq:payoff}).
\end{enumerate}

Of course, the game $\game$ defined in this way is not \attn{finite} (in the original sense of \cite{Na51}) because
\begin{inparaenum}[\itshape a\upshape)]
\item the players are not mixing over a finite set of possible actions; and
\item even though the players' strategy spaces happen to be simplices, their payoffs are not multilinear over them.
\end{inparaenum}
On the other hand, since $\strat$ is a convex polytope and the utilities $u_{k}$ of the users are concave functions of their power allocations $p_{k}$, we immediately see that the game $\game$ is {\em concave} in the sense of Rosen \cite{Ro65}. Moreover, it was shown in \cite{PBLD09} that $\game$ is actually an {\em exact potential} game,\footnote{In the finite player sense of Monderer and Shapley \cite{MS96}, and not in the continuous sense of \cite{Sa01}.} i.e. that it admits a (global) potential function $\Phi:\strat\to\R$ such that:
\begin{equation}
\label{eq:potentialdef}
u_{k}(p_{-k};p_{k}') - u_{k}(p_{-k};p_{k}) = \Phi(p_{-k};p_{k}) - \Phi(p_{-k};p_{k}'),
\end{equation}
for all players $k\in\play$, and for all power allocations $p_{k},p_{k}'\in\strat_{k}$ of user $k$ and $p_{-k}\in\strat_{-k}\equiv\prod_{\ell\neq k}\strat_{\ell}$ of $k$'s opponents $\play_{-k}\equiv\play\exclude\{k\}$.\footnote{The change of signs in (\ref{eq:potentialdef}) from \cite{MS96} is deliberate. Our convention was chosen so as to conform with physics, where it is the {\em minima} of the potential function that are stable.}

In fact, the authors of \cite{PBLD09} provided the following explicit form for the potential function $\Phi$:
\begin{equation}
\label{eq:potential}
\Phi(p) = -\insum_{\alpha} b_{\alpha} \log\left(\sigma_{\alpha}^{2} + \insum_{k} g_{k\alpha} p_{k\alpha}\right).
\end{equation}
For posterity, note here that $\Phi$ is itself convex, but not necessarily {\em strictly} so:\footnote{This is precisely the subtle mistake that underlies the equilibrium uniqueness argumentation of \cite{PBLD09}.} indeed, any two power profiles $p,p'\in\strat$ such that $\insum_{k} g_{k\alpha}p_{k\alpha} = \insum_{k} g_{k\alpha} p_{k\alpha}'$ for all $\alpha\in\act$ will also have $\Phi(p) = \Phi(p')$. This simple observation will be of crucial importance in determining the Nash set of the game, so we will pause here to introduce the concept of {\em degeneracy}.

\smallskip

\begin{figure*}
\label{fig:degeneracy}
\centering
\subfigure[$\ind(\game)=0$: generic level sets]{%
\label{subfig:contirr}
\includegraphics[width=0.45\columnwidth]{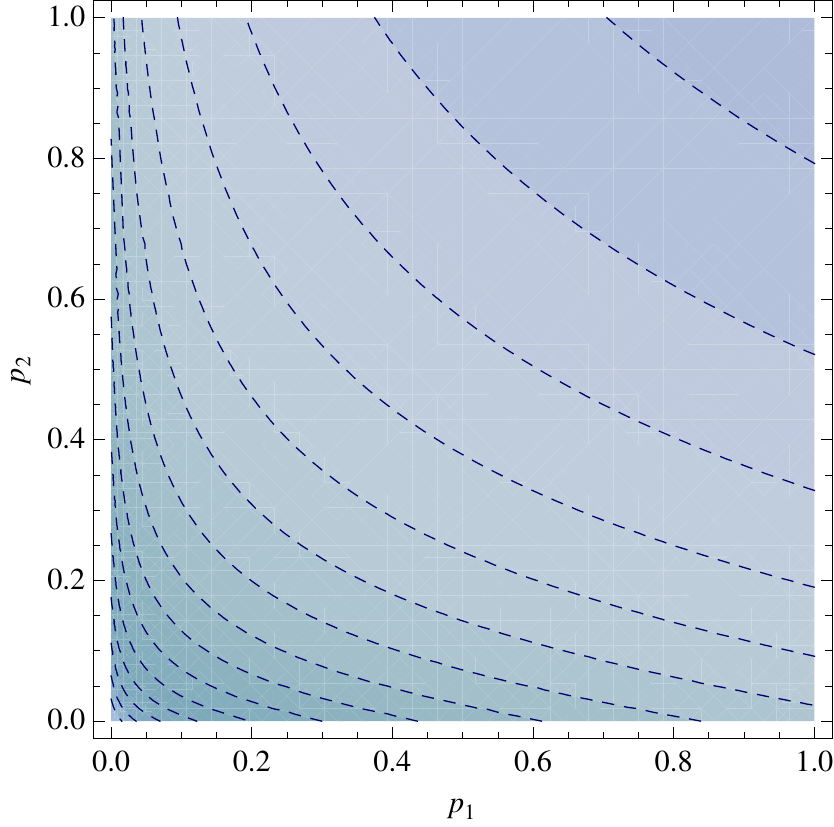}}
\hfill
\subfigure[$\ind(\game)>0$: degeneration into affine sets]{
\label{subfig:contred}%
\includegraphics[width=0.45\columnwidth]{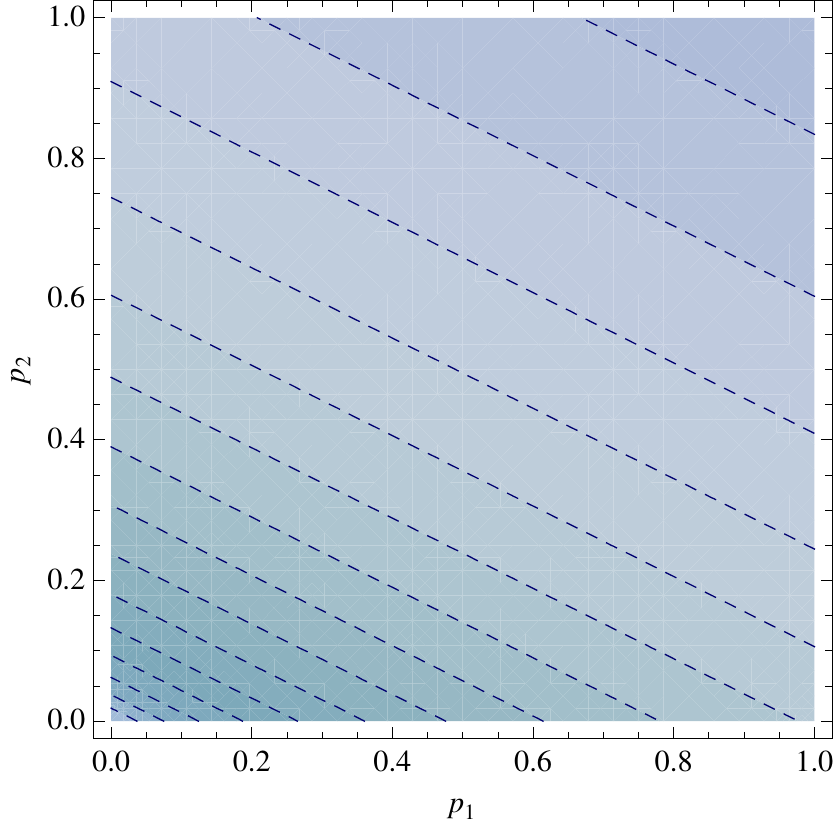}}
\caption{The level sets (dashed blue lines) of the potential function $\Phi$ in a $2\times2$ game with and without degenerate directions (Figs.~\ref{subfig:contred} and \ref{subfig:contirr} respectively). Degeneracy nullifies degrees of freedom and introduces redundant directions in the system.}
\end{figure*}

To that end, let $T_{p}\strat$ denote the tangent space of $\strat$ at $p$. Since $\strat$ is an affine polytope embedded in $\R^{K\act}$, it is easy to see that for every interior point $p\in\Int(\strat)$, $T_{p}\strat$ will be isomorphic to the subspace which is ``parallel'' to the polytope $\strat$:
\begin{equation}
\txs
T_{p}\strat \cong Z = \left\{z\in\R^{K\act}:\sum_{\alpha} z_{k\alpha} = 0\text{ for all $k\in\play$}\right\}.
\end{equation}
However, as we just noted, some of these $KA-K$ directions will be {\em degenerate} (or {\em redundant}), in the sense that the potential $\Phi$ remains constant as we move along them. Specifically, the set of (almost surely independent) constraints
\begin{equation}
\label{eq:redconstraints0}
\txs
\insum_{k}g_{k\alpha} z_{k\alpha} = 0,\,
\alpha\in\act,
\end{equation}
cuts itself a $(KA-A)$-dimensional subspace $W$ of $R^{K\act}$ whose intersection with $Z$ will correspond to the total of $K+A$ constraints:
\begin{subequations}
\label{eq:constraints}
\begin{align}
\label{eq:tanconstraints}
a)\quad	&\insum_{\alpha} z_{k\alpha} = 0,	&k\in\play;\\
\label{eq:redconstraints}
b)\quad	&\insum_{k} g_{k\alpha} z_{k\alpha} = 0,	&\alpha\in\act.
\end{align}
\end{subequations}


Of course, the $K$ tangent space constraints (\ref{eq:tanconstraints}) are set in stone while the $A$ degeneracy constraints (\ref{eq:redconstraints}) depend on the realization of the channel gains $g_{k\alpha}$.\footnote{These conditions are remarkably similar to the MIMO rank condition $\rank (\mb{H}^{\dag}\mb{H})=\sum_{k=1}^K n_{t,k} \leq n_r + K$ (where $\mb{H} = [\mb{H}_1, \hdots, \mb{H}_K]$ is the system's channel matrix) which ensures that there exists a unique Nash equilibrium \cite{BelmegaThesis}.} It is thus possible (though improbable) that some of the constraints (\ref{eq:constraints}) are linearly dependent. To keep track of all this, we have:
\begin{definition}
The subspace $W\leq\R^{K\act}$ defined by the constraints (\ref{eq:redconstraints}) will be called the space of {\em degenerate} (or {\em redundant}) directions of the game $\game$. Moreover, we define the {\em degeneracy} (or {\em redundancy}) {\em index} of $\game$ to be:
\begin{equation}
\label{eq:degeneracy}
\ind(\game) \equiv \dim(W\cap Z),
\end{equation}
where $Z$ is the tangent space determined by the admissibility constraints (\ref{eq:tanconstraints}).
\end{definition}

\begin{example}
As we just saw, $\ind(\game) = KA - K - A$ (a.s.), so there is no degeneracy in games with $K=2$ users and $A=2$ nodes. However, if the channel gains of the two users happen to be linearly dependent (a zero-probability event to be sure, but one which could be approximated reasonably well by strongly collocated users), then we can have degenerate directions even in a $2\times2$ game (see Fig.~\ref{fig:degeneracy}). In that case, the potential function $\Phi$ ceases to be strictly convex, so there is no a priori reason that the potential's minimum set will be a singleton.
\end{example}

\section{Equilibrium Analysis}

In this section, our main goal will be to describe the Nash set of the game and, more specifically, to show that it consists (almost surely) of a unique equilibrium point which is located at a face of the strategy space $\strat$.


\attn{
This problem has attracted considerable interest in the literature, where the papers by Scutari et al. \cite{SPB08i-sp,SPB08ii-sp,SPB08-it,SPB08-jsac} provide a set of sufficient conditions for uniqueness in more general interference channel scenaria, and, more recently, in \cite{PBLD09}, where the authors focus on the parallel MAC problem. Unfortunately, these approaches are problematic (for different reasons): on the one hand, we will see that the sufficient conditions of Scutari et al. \cite{SPB08i-sp,SPB08ii-sp,SPB08-it,SPB08-jsac} do not hold in our setting; on the other hand, the uniqueness proof of \cite{PBLD09} is only valid under the extremely restricting condition that the game is {\em non-degenerate}, i.e. that $KA\leq K+A$ (otherwise, the potential $\Phi$ is not strictly convex).

Indeed, especially this last condition holds for very few scenarios, only one of which is (barely) non-trivial:
\begin{inparaenum}[\itshape a\upshape)]
\item when we have $K=1$ user and an arbitrary number $A\geq1$ of nodes (in which case the problem reduces to an optimization one solved by water-filling \cite{boyd-book-2004});
\item when an arbitrary number of $K\geq1$ users transmits with the maximum possible power to a single node shared by all ($A=1$; this was also the scenario studied by \cite{ABSA02} who also introduced a linear pricing scheme to compensate for power costs); and
\item in the $2x2$ case which is easy to solve directly.
\end{inparaenum}
In spite of the above, our main result in this section is that the sufficient conditions of Scutari are actually far from necessary: {\em for (almost) any realization of the channel gain parameters $g_{k\alpha}$, there exists a unique Nash equilibrium.}
}

\subsection{Nash Equilibrium Conditions}

Since we have a finite number of players, the notion of Nash equilibrium takes the form of stability in the face of unilateral deviations. More specifically:
\begin{definition}
We will say that the power profile $q\in\strat$ is at {\em Nash equilibrium} in the game $\game$ when
\begin{equation}
\label{eq:Nash}
u_{k}(q) \geq u_{k}(q_{-k};q_{k}'),
\end{equation}
for all $k\in\play$, and for every deviation $q_{k}'\in\strat_{k}$ of player $k$.

In particular, if $q$ satisfies the strict version of the inequalities (\ref{eq:Nash}), then it will be called a {\em strict equilibrium} of $\game$.
\end{definition}

As is standard in convex potential games, to calculate the Nash set $\eq\equiv\eq(\game)$ of the game, we only need to look at the (necessarily convex) minimum set of the potential function $\Phi$. To that end, the first order constrained Karush-Kuhn-Tucker minimization conditions \citep{PBLD09} show that a power profile $q\in\strat$ will be at Nash equilibrium if and only if:
\begin{subequations}
\label{eq:KKT}
\begin{flalign}
a)\quad	&\lambda_{k} - \frac{b_{\alpha}g_{k\alpha}}{\sigma_{\alpha}^{2} + \insum_{\ell} g_{\ell\alpha} q_{\ell\alpha}}\geq 0\\
b)\quad	&q_{k\alpha} \left(
\lambda_{k} - \frac{b_{\alpha}g_{k\alpha}}{\sigma_{\alpha}^{2} + \insum_{\ell} g_{\ell\alpha} q_{\ell\alpha}}\right)
= 0,
\end{flalign}
\end{subequations}
for all players $k\in\play$ and all nodes $\alpha\in\act$ (and with the obvious constraints imposed by the condition $q\in\strat$).

An obvious observation that can be gleaned from the above is that if $q$ is a Nash equilibrium, then either
\begin{inparaenum}[\itshape a\upshape)]
\item the support $\supp(q_{k})\equiv\{\alpha\in\act: q_{k\alpha}>0\}$ of a user's power allocation is a singleton (i.e. the user only transmits to a single node); or
\item we will have the ``waterfilling'' condition:
\begin{equation}
\label{eq:waterfill}
\frac{g_{k\alpha}}{g_{k\beta}} = \frac{r_{\alpha}}{r_{\beta}}\quad
\text{for all $\alpha,\beta\in\supp(q_{k})$},
\end{equation}
where $r_{\alpha}$ is the user-independent quantity given by:
\begin{equation}
\label{eq:rdef}
r_{\alpha}^{-1}
\equiv \frac{b_{\alpha}}{\sigma_{\alpha}^{2} + \insum_{\ell} g_{\ell\alpha} q_{\ell\alpha}}.
\end{equation}
In other words, if user $k$ connects to more than one node and is at equilibrium, then he must be ``waterfilling'' the quantity $g_{k\alpha}/r_{\alpha}$ among the nodes that he employs.
\end{inparaenum}

\smallskip

A promising way to determine whether our game admits a {\em unique} equilibrium is to take advantage of the plethora of sufficient conditions that have been established in the literature for this purpose. In our setting, the condition which is easiest to check was the one proposed by Scutari et al. in \cite[Equation (21)]{SPB08-jsac}, and which takes the form:
\begin{equation}
\label{eq:maxcondition}
\tag{Cmax}
\rho(\mb S_{\text{max}}) < 1,
\end{equation}
where $\rho(\mb S_{\text{max}})=\max\{|\lambda|:\lambda\in\eig(\mb S_{\text{max}})\}$ is the {\em spectral radius} (i.e. the eigenvalue with the largest modulus) of the $K\times K$ matrix $\mb S_{\text{max}} = \{S^{\text{max}}_{k\ell}\}$ defined as:
\begin{equation}
\label{eq:Smax}
S^{\text{max}}_{k\ell} =
\begin{cases}
0,	&\quad k=\ell,\\
\max_{\alpha}\left\{g_{\ell\alpha}\big/g_{k\alpha}\right\},		&\quad k\neq\ell.
\end{cases}
\end{equation}

However, since $\max_{\alpha}\{g_{\ell\alpha}/g_{k\alpha}\} = \left(\min_{\alpha}\{g_{k\alpha}/g_{\ell\alpha}\}\right)^{-1}\geq \left(S^{\text{max}}_{\ell k}\right)^{-1}$, we immediately see that the entries of $\mb S_{\text{max}}$ satisfy the inequality $S_{k\ell}^{\text{max}} S_{\ell k}^{\text{max}}\geq1$ for any distinct pair of users $k,\ell\in\play$. Hence, $\tr(\mb S_{\text{max}}^{2})$ will be bounded from below by:
\begin{equation}
\label{eq:tracebound}
\tr(\mb S_{\text{max}}^{2}) = \insum_{k,\ell} S^{\text{max}}_{k\ell} S^{\text{max}}_{\ell k} \geq K(K-1),
\end{equation}
and, by the spectral radius bounds of \cite{Horne97}, we will have:
\begin{equation}
\label{eq:spectralbound}
\rho(\mb S_{\text{max}}) \geq \frac{|\tr(\mb S_{\text{max}})|}{S} + \sqrt{\frac{\tr(\mb S_{\text{max}}^{2}) - \tr(\mb S_{\text{max}})^{2}/S}{S(S-1)}},
\end{equation}
where $S=\rank(\mb S_{\text{max}})$.\footnote{Strictly speaking, (\ref{eq:spectralbound}) holds if $S\geq 2$, but we can trivially disregard the case $S<2$, because $S=K$ almost surely and the one-user case holds little interest.} However, since $\tr(\mb S_{\text{max}})=0$ by definition, (\ref{eq:spectralbound}) gives $\rho(\mb S_{\text{max}})\geq 1$, so the sufficient condition (\ref{eq:maxcondition}) fails.


Other sufficient conditions were put forth in \cite{SPB08i-sp,SPB08ii-sp,SPB08-it} and \cite{LP06}, based on the matrices $\mb S(\alpha)=\{S_{k\ell}(\alpha)\}$ defined as:
\begin{equation}
S_{k\ell}(\alpha) =
\begin{cases}
0,						&k=\ell,\\
g_{\ell\alpha}/g_{k\alpha},	&k\neq\ell.
\end{cases}
\end{equation}
In particular, it was shown in \cite{SPB08i-sp} that if:
\begin{equation}
\label{eq:spectralcondition}
\tag{C1}
\rho(\mb S(\alpha)) <1\quad
\text{for all $\alpha\in\act$},
\end{equation}
then the game $\game$ admits a unique Nash equilibrium; in a similar vein, the authors of \cite{LP06} proposed the condition:
\begin{equation}
\label{eq:positivecondition}
\tag{C2}
\mb I + \mb S(\alpha) \succcurlyeq 0\quad
\text{for all $\alpha\in\act$},
\end{equation}
where ``$\succcurlyeq 0$'' signifies positive-definiteness.

Of these two conditions (\ref{eq:spectralcondition}) is {\em stronger} than (\ref{eq:maxcondition}) in the sense that (\ref{eq:maxcondition}) is sufficient for (\ref{eq:spectralcondition}). However, the same analysis as before shows that in the case of the $\mb S(\alpha)$ matrices, (\ref{eq:tracebound}) holds as an equality, so we still get $\rho(\mb S(\alpha))\geq 1$ for all $\alpha\in\act$, causing (\ref{eq:spectralcondition}) to fail. Similarly, even though the positive-definiteness condition (\ref{eq:positivecondition}) is independent of (\ref{eq:spectralcondition}) and (\ref{eq:maxcondition}), the definition of $\mb S(\alpha)$ yields $S_{k\ell}(\alpha) + S_{\ell k}(\alpha)\geq 2$ for all $k\neq\ell$. Consequently, the element with the largest modulus of the symmetrized matrix $\mb I + \frac{1}{2}(\mb S(\alpha) + \mb S^{\dag}(\alpha))$ does not lie on the main diagonal, so the matrix $\mb I + \mb S(\alpha)$ cannot be positive-definite either.

\begin{remark*}
Strictly speaking, condition (\ref{eq:spectralcondition}) was phrased in \cite{SPB08i-sp} in terms of  a slightly different version of the matrix $\mb S(\alpha)$ where $S_{k\ell}(\alpha)=0$ whenever the channel of $\alpha$ is ``too bad'' for either $k$ or $\ell$ (in a sense made precise in \cite{SPB08i-sp}). In this more general setup, if $\alpha$ is ``bad'' for user $k$, then the $k$-th row and $k$-th column of $\mb S(\alpha)$ vanish, so the bound (\ref{eq:tracebound}) is decreased to $(K-r)(K-r-1)$, where $r$ is the number of zero rows and columns that were introduced in $\mb S(\alpha)$. However, this also reduces the rank of $\mb S(\alpha)$ accordingly, so, assuming that $\rank(\mb S(\alpha))\geq2$, the bound (\ref{eq:spectralbound}) still gives $\rho(\mb S(\alpha))\geq1$.

Of course, this still leaves open a small window where the condition (\ref{eq:spectralcondition}) might be salvaged \textendash\ namely the rare occurence where the $\mb S(\alpha)$ matrices all have rank $1$ or less. However, instead of focusing on this very special case, we note that even the extensive numerical simulations of \cite{SPB08i-sp} show that the sufficient condition (\ref{eq:spectralcondition}) almost never holds in the parallel MAC setting. Indeed, if we follow \cite{SPB08i-sp} and assume for simplicity that the transmitter-receiver distances are all equal ($d_{qr} = d_{rq}$ in their notation), then the ``normalized interlink distance'' becomes equal to $1$ and Figure 1 of \cite{SPB08i-sp} reveals that (\ref{eq:spectralcondition}) fails almost surely.
\end{remark*}

We thus see that, despite their theoretical value, the sufficient conditions that have been established in the literature are quite problematic in the parallel MAC setting because they are typically never met (except possibly in some very special cases). Therefore, in order to address the uniqueness issue in complete generality, we will need to develop a different set of tools.

\subsection{Representing Power Profiles as Graphs}
As we shall see, the ``waterfilling'' conditions (\ref{eq:waterfill}) impose some pretty severe constraints on the structure of the equilibrium set $\eq$, because whenever a user waterfills between nodes, his channel gains must ``split'', i.e. be of the form $g_{k\alpha} = \lambda_{k} r_{\alpha}$. This is actually best understood pictorially, by representing a power profile $p\in\strat$ as a graph:

\begin{definition}
\label{def:graph}
We will say that the (multi)graph $\graph\equiv(\nodes,\edges)$ {\em represents} the power profile $p=(p_{1},\dotsc,p_{K})\in\strat$ if:
\begin{enumerate}
\item $\nodes=\act$: the nodes of $\graph$ coincide with the network's;
\item for each $k\in\play$, there is a node $\alpha\in\act$ (called the {\em hub} of user $k$ in $\graph$) to which the user assigns positive power $p_{k\alpha}>0$, and which is joined by an edge of $\graph$ to every {\em other} node $\beta\in\supp(p_{k})\exclude\{\alpha\}$.
\end{enumerate}
\end{definition}

In simpler words, to represent a power profile $p\in\strat$ as a graph, one merely has to take the set of wireless nodes as the set of the graph's nodes, and then, for every user $k\in\play$, to pick a node which the user employs and connect it with an edge to every other node to which the user assigns positive power. Of course, depending on the choice of ``hub'' for each user $k\in\play$, one might end up with non-isomorphic graphs representing the same power profile $p$. However, this lack of uniqueness will not be important to us, so we will occasionally abuse Definition \ref{def:graph} by using $\graph(p)$ to collectively denote {\em any} graph which represents the profile $p\in\strat$.

\smallskip

In light of the above, we now state a few key lemmas and corollaries that will be crucial in our efforts to understand the structure of the equilibrial set $\eq$. The first one is an elegant structural property of equilibrial graphs:

\begin{lemma}
\label{lem:graph}
Let $\graph\equiv\graph(p)$ represent a power profile $p\in\eq$ which is at Nash equilibrium. Then $\graph$ is almost surely a forest \textendash\ that is, $\graph$ contains no cycles.
\end{lemma}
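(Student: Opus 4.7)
The plan is to exploit the waterfilling equations~(\ref{eq:waterfill}) to show that any cycle in $\graph(p)$ would force a nontrivial monomial relation among the channel gains $g_{k\alpha}$; since these are drawn from a continuous joint distribution, any such relation is violated almost surely, and a union bound over the finitely many candidate cycle structures then yields the lemma. By the KKT equilibrium conditions~(\ref{eq:KKT}), every user $k$ whose support $\supp(p_k)$ has at least two elements (equivalently, every user that contributes any edge to $\graph(p)$) satisfies $g_{k,\alpha}/g_{k,\beta} = r_\alpha/r_\beta$ for all $\alpha,\beta\in\supp(p_k)$, with $r_\alpha$ the user-independent quantity of~(\ref{eq:rdef}).

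Supposing for contradiction that $\graph(p)$ contains a cycle, I would extract a simple cycle $\alpha_0\alpha_1\cdots\alpha_{n-1}\alpha_0$ and, for each edge $\alpha_i\alpha_{i+1}$, pick a user $k_i$ whose star in $\graph(p)$ supplies that edge. Multiplying the waterfilling identities around the cycle causes the factors $r_{\alpha_i}/r_{\alpha_{i+1}}$ to telescope to unity, leaving the relation
\begin{equation*}
\prod_{i=0}^{n-1}\frac{g_{k_i,\alpha_i}}{g_{k_i,\alpha_{i+1}}} \;=\; 1.
\end{equation*}
The crucial step is then to check that this identity is a \emph{nontrivial} Laurent monomial in the variables $\{g_{k\alpha}\}$. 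Grouping factors by user, every edge contributed by a user $k$ emanates from the common hub $\alpha^{*}_k$, and since a simple cycle visits $\alpha^{*}_k$ at most once (and, when it does, through exactly two incident cycle-edges), user $k$ contributes either one edge, yielding a factor $g_{k,\beta}/g_{k,\gamma}$ with $\beta\neq\gamma$, or two consecutive edges at $\alpha^{*}_k$ whose combined contribution telescopes to $g_{k,\alpha_{j-1}}/g_{k,\alpha_{j+1}}$, still with $\alpha_{j-1}\neq\alpha_{j+1}$ along a simple cycle of length at least three (the length-two case is trivially handled since two parallel edges must come from two distinct users). Different users contribute factors built from disjoint families of independent channel gains, so no global cancellation is possible.

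Since the $g_{k\alpha}$ are drawn from a continuous, nonatomic joint distribution, the zero locus of any fixed nontrivial monomial relation in these variables has Lebesgue measure zero and hence probability zero; the number of candidate simple cycles (classified by a node sequence in $\act$ together with an assignment of users in $\play$ to its edges) is finite, so a union bound shows that $\graph(p)$ is acyclic with probability one, i.e., it is almost surely a forest. I expect the main obstacle to be precisely this nontriviality step: the star structure of each user's contribution requires careful bookkeeping in order to rule out complete cancellation when a single user contributes more than one edge to the cycle.
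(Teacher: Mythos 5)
Your proposal is correct and follows essentially the same route as the paper's own proof: multiply the waterfilling identities (\ref{eq:waterfill}) around a putative cycle so the $r_\alpha$ factors telescope, obtain a monomial relation among the channel gains, and invoke the continuity of their distribution to conclude this happens with probability zero. Your extra bookkeeping on nontriviality (grouping edges by user and handling a user contributing two consecutive cycle edges at its hub) is a welcome refinement of the paper's brief ``no cancellations'' remark, but it does not change the argument.
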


\begin{proof}
The intuitive idea behind this lemma is that if there is a cycle, then we can get a chain of fractions $g_{k_{1},\alpha_{1}}/g_{k_{0},\alpha_{0}}$, $g_{k_{2},\alpha_{2}}/g_{k_{2},\alpha_{1}}$, $\dotsc$, which will have a product equal to $1$ because of the waterfilling condition (\ref{eq:waterfill}). However, this represents a condition on the $g$'s which occurs with zero probability, thus providing a contradiction.

To make this idea precise, assume that $\graph$ contains a cycle $\Gamma$ denoted as a sequence of edges $\Gamma = (e_{1},\dotsc,e_{n})$.\footnote{Note that keeping track only of the nodes is not enough because two distinct edges might link the same pair of nodes.} Since an edge can only be owned by a single player, this cycle gives rise to a sequence of players which we also denote by $\Gamma = (k_{1},\dotsc,k_{n})$.

So, if $(\alpha_{0},\alpha_{1},\dotsc,\alpha_{n})$ is the corresponding sequence of nodes that $\Gamma$ passes through (obviously, $\alpha_{0} = \alpha_{n}$), then (\ref{eq:waterfill}) gives:
\begin{equation}
\frac{g_{k_{j},\alpha_{j}}}{g_{k_{j},\alpha_{j-1}}} = \frac{r_{\alpha_{j}}}{r_{\alpha_{j-1}}},
\text{ for all $j=\ito{n}$.}
\end{equation}
Therefore, multiplying these $n$ equations together, we get:
\begin{equation}
\label{eq:chain}
\frac{g_{k_{1},\alpha_{1}}}{g_{k_{0},\alpha_{0}}} \dotsm \frac{g_{k_{n},\alpha_{n}}}{g_{k_{n-1},\alpha_{n-1}}}
=\frac{r_{\alpha_{1}}}{r_{\alpha_{0}}} \dotsm \frac{r_{\alpha_{n}}}{r_{\alpha_{n-1}}}
=1.
\end{equation}
Since there are no cancellations in this last equation (recall that the nodes $a_{j}$, $j=1,\dotsc,n-1$ of $\Gamma$ are all distinct), it will describe a measure zero submanifold of the space from which the channel coefficients $g$ are drawn. As a result (\ref{eq:chain}) only holds with probability zero and, hence, the assumption that $\graph$ contains a cycle is almost surely false.
\end{proof}

\begin{figure}
\centering
\begin{tikzpicture}
[nodestyle/.style={circle,draw=black,fill=gray!5, inner sep=2pt},>=stealth]

\coordinate (A) at (-2,0);
\coordinate (B) at (0,1.5);
\coordinate (C) at (2,0);
\coordinate (D) at (0,-1.5);

\node (A) at (A) [nodestyle] {$\alpha$};
\node (B) at (B) [nodestyle] {$\beta$};
\node (C) at (C) [nodestyle] {$\gamma$};
\node (D) at (D) [nodestyle] {$\delta$};

\draw [red, densely dashed] (A) to [bend right = 15] (B);
\draw [red, densely dashed] (A) to (D);
\draw [blue] (B) to [bend right=15] (A);
\draw [blue] (B) to (C);
\draw [webgreen,densely dotted] (D) to (C);
\draw [webgreen,densely dotted] (D) to (B);

%

\end{tikzpicture}
\caption{A graph representing a power profile in a game with 3 users (red, blue and green) and 4 nodes ($\alpha,\beta,\gamma$ and $\delta$). In the profile represented above, the red player uses $\alpha,\beta$ and $\delta$, blue uses $\alpha,\beta$ and $\gamma$, and green employs $\beta,\gamma$ and $\delta$.}
\end{figure}
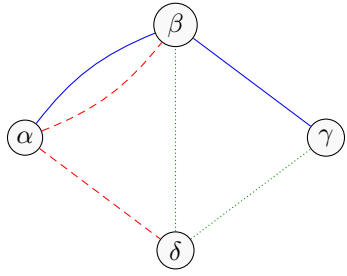

As an easy corollary of the above lemma, we also get:
\begin{corollary}
\label{cor:maxwaterfill}
If $p\in\eq$ is an equilibrial power profile, then there are (a.s.) at most $A-1$ instances of waterfilling (i.e. two nodes employed by the same player).
\end{corollary}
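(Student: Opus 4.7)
The plan is to derive the bound as an immediate combinatorial consequence of Lemma \ref{lem:graph}. Fix an equilibrial profile $p\in\eq$ and pick any graph $\graph = \graph(p)$ representing it (the choice of hubs is irrelevant). By Lemma \ref{lem:graph}, $\graph$ is a forest almost surely, and a standard fact from graph theory says that a forest on $|\nodes| = A$ vertices contains at most $A-1$ edges.

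Next, I would carefully count the edges of $\graph$ in terms of the users' supports. By Definition \ref{def:graph}, each user $k$ with $|\supp(p_k)| \geq 1$ selects a single hub $\alpha_k \in \supp(p_k)$ and then joins $\alpha_k$ to every other node in $\supp(p_k)$. Thus user $k$ contributes exactly $|\supp(p_k)| - 1$ edges to $\graph$, and the total edge count is
\begin{equation}
|\edges| = \sum_{k\in\play} \bigl(|\supp(p_k)| - 1\bigr)_{+}.
\end{equation}
Each such edge precisely corresponds to an occurrence of two distinct nodes being employed together by some single user, i.e.\ to an ``instance of waterfilling'' in the sense of the waterfilling condition (\ref{eq:waterfill}).

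Combining these two observations gives $\sum_{k} (|\supp(p_k)| - 1)_{+} = |\edges| \leq A - 1$ almost surely, which is exactly the desired bound. There is no real obstacle here \textendash\ the statement is essentially the bookkeeping translation of ``forest $\Rightarrow$ at most $A-1$ edges'' through the graph representation of Definition \ref{def:graph}, with Lemma \ref{lem:graph} doing all of the probabilistic heavy lifting. The only thing worth flagging is the mild subtlety that $\graph(p)$ is not uniquely defined, but since every admissible choice of hubs yields the same edge count (namely $\sum_k(|\supp(p_k)|-1)_+$), the argument is independent of this ambiguity.
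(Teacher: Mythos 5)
Your argument is correct and follows essentially the same route as the paper: invoke Lemma \ref{lem:graph} to conclude that $\graph(p)$ is a forest (a.s.), recall that a forest on $A$ vertices has at most $A-1$ edges, and identify each edge with an instance of waterfilling. The extra bookkeeping of the edge count $\sum_k(|\supp(p_k)|-1)$ and the remark about hub ambiguity are fine but not needed beyond what the paper's one-line proof already contains.
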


\begin{proof}
Simply note that a forest on $A$ nodes can have at most $A-1$ edges; our result then follows by recalling that an edge in this context simply represents an instance of waterfilling.
\end{proof}

From a geometrical point of view, this shows that Nash equilibria can only live on the faces of the strategy space $\strat$:

\begin{corollary}
\label{cor:face}
Let $p\in\eq$ be an equilibrial power profile. Then $p$ lies in the interior of an at most $(A-1)$-dimensional face of $\strat$ (a.s.).
\end{corollary}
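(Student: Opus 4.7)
The plan is to translate the graph-theoretic bound of Corollary \ref{cor:maxwaterfill} into a statement about the geometric face structure of the product simplex $\strat = \prod_{k}\strat_{k}$. The faces of $\strat$ are precisely the products $\prod_{k} F_{k}$, where each $F_{k}$ is a face of $\strat_{k}$; moreover, a point $p = (p_{1},\dotsc,p_{K})\in\strat$ lies in the relative interior of exactly one such face, namely the one where $F_{k}$ is the subsimplex of $\strat_{k}$ with vertex set $\supp(p_{k})\subseteq\act$. In particular, the dimension of this face is
\begin{equation}
\dim F = \insum_{k}\dim F_{k} = \insum_{k} \bigl(|\supp(p_{k})|-1\bigr).
\end{equation}

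The key observation I would then make is that this dimension count agrees exactly with the edge count of any graph $\graph(p)$ representing $p$ in the sense of Definition \ref{def:graph}. Indeed, by construction of $\graph(p)$, each user $k$ contributes one hub node together with an edge joining the hub to every other node in $\supp(p_{k})$, giving $|\supp(p_{k})|-1$ edges per user (and $0$ if $|\supp(p_{k})|\leq1$). Summing over $k$ shows $|\edges| = \insum_{k}(|\supp(p_{k})|-1) = \dim F$.

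Now I would invoke Lemma \ref{lem:graph}: almost surely, $\graph(p)$ is a forest on the vertex set $\nodes = \act$ of size $A$. Since a forest on $A$ vertices has at most $A-1$ edges, this forces $|\edges|\leq A-1$, hence $\dim F \leq A-1$. Combining with the fact that $p$ lies in the relative interior of $F$, the conclusion follows.

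There is essentially no obstacle here beyond bookkeeping: the two nontrivial inputs (the forest property and the identification of the ``waterfilling edges'' with degrees of freedom in the support) are already in place from Lemma \ref{lem:graph} and Corollary \ref{cor:maxwaterfill}. The only subtlety worth flagging in the writeup is that the bound uses the specific structure of $\graph(p)$ (one hub per user), so I would briefly confirm that the hub choice does not alter the total edge count, which is why the ambiguity in $\graph(p)$ noted after Definition \ref{def:graph} is harmless.
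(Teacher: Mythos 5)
Your argument is correct and is essentially the paper's own proof, only written out in more detail: the paper likewise notes that a user supported on $m$ nodes sits in the interior of an $(m-1)$-dimensional face of $\strat_{k}$ and then sums these dimensions against the edge count bounded by Corollary \ref{cor:maxwaterfill} (equivalently, the forest property of Lemma \ref{lem:graph}). Your extra remarks on the product-face structure and the hub-independence of the edge count are harmless elaborations of the same bookkeeping.
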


\begin{proof}
Since a user who waterfills between $m$ nodes transmits with zero power towards the rest of the nodes, his power allocation $p_{k}$ will belong to the interior of an $(m-1)$-dimensional face of $\strat_{k}$. The result then follows by combining this observation with Corollary \ref{cor:maxwaterfill}.
\end{proof}

We thus see that the Nash set $\eq$ has to be contained in the interior of a face of $\strat$ of dimension at most $A-1$. We will now show that $\eq$ is actually a singleton:

\begin{theorem}
The game $\game$ has a unique Nash equilibrium (a.s.).
\end{theorem}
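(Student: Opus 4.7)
The plan is to argue by contradiction, leveraging two structural facts already established in the excerpt: (i) the potential $\Phi$ factors through the load vector $L(p) = (L_\alpha(p))_\alpha$ with $L_\alpha(p) = \sum_k g_{k\alpha} p_{k\alpha}$ in a strictly convex way, and (ii) by Lemma \ref{lem:graph} every equilibrium has an (a.s.) forest incidence graph. Suppose that $p, p' \in \eq$ with $p \neq p'$, and set $z = p' - p$ and $q = \tfrac12(p+p')$. Convexity of the minimizer set of $\Phi$ gives $q \in \eq$.

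Writing $\Phi(p) = -\sum_\alpha b_\alpha \log(\sigma_\alpha^2 + L_\alpha(p))$ as the composition of a strictly convex function of $L$ with the affine map $p \mapsto L(p)$, both minimizers $p, p'$ must have identical loads; hence
\begin{equation*}
\insum_k g_{k\alpha} z_{k\alpha} = 0 \quad \text{for every } \alpha \in \act.
\end{equation*}
Together with the simplex tangency condition $\sum_\alpha z_{k\alpha} = 0$ (both $p, p' \in \strat$) and the support restriction $z_{k\alpha} = 0$ whenever $\alpha \notin \supp(q_k)$ (which follows from non-negativity of $p, p'$ and the definition of $q$), the vector $z$ lives in a tightly constrained subspace determined by the graph $\graph(q)$.

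The core of the argument is a leaf-pruning induction on the number of edges $E$ of $\graph(q)$, which Lemma \ref{lem:graph} guarantees is a.s.\ a forest. If $E = 0$, every $\supp(q_k)$ is a singleton and the tangent conditions $\sum_\alpha z_{k\alpha} = 0$ force $z = 0$. If $E \geq 1$, pick a leaf $\alpha$ of $\graph(q)$: its unique incident edge is owned by some player $k$ with $|\supp(q_k)| \geq 2$, and any other player $\ell$ with $\alpha \in \supp(q_\ell)$ must satisfy $\supp(q_\ell) = \{\alpha\}$ (otherwise $\ell$'s star would contribute a second edge at $\alpha$, contradicting its degree). For each such singleton-support player, the tangent and support conditions yield $z_{\ell\alpha} = 0$; consequently the load balance at $\alpha$ collapses to $g_{k\alpha} z_{k\alpha} = 0$, whence $z_{k\alpha} = 0$ since $g_{k\alpha} > 0$. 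Deleting $\alpha$ from $\supp(q_k)$ removes the leaf edge, producing a forest with $E-1$ edges on a strictly smaller unknown vector, and the inductive hypothesis closes the argument. This forces $z = 0$, contradicting $p \neq p'$.

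The main obstacle I expect is the bookkeeping at the leaf step: I must verify precisely that only one edge-owning player can "see" a leaf with multi-node support, while other users of the leaf must be singleton players whose contribution to the load constraint vanishes automatically from tangency. The delicate point is that this entire elimination proceeds without ever imposing algebraic relations on the $g_{k\alpha}$ (it uses only their positivity), so no additional probability-zero sets are introduced beyond the one already produced by Lemma \ref{lem:graph}; consequently the conclusion holds on the same almost-sure event on which $\graph(q)$ is a forest.
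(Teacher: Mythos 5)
Your proof is correct, and although it shares its skeleton with the paper's argument \textendash\ both reduce uniqueness to showing that the difference $z=p'-p$ of two equilibria, which must satisfy $\insum_{k}g_{k\alpha}z_{k\alpha}=0$ for all $\alpha$ because the potential is a strictly convex function of the loads composed with an affine map, cannot be a nonzero tangent vector supported on $\supp(q)$, and both rest on the forest property of Lemma \ref{lem:graph} \textendash\ the decisive step is genuinely different. The paper finishes with a dimension count: by Corollary \ref{cor:face} the face carrying the Nash set has dimension at most $A-1$, the degenerate subspace $W$ has dimension $KA-A$ (a.s.), and since $(A-1)+(KA-A)<KA$ two subspaces ``in general position'' meet trivially. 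That step leans on a genericity assertion about the relative position of $W$ and the (channel-dependent) face containing the equilibrium which the paper does not spell out. Your leaf-pruning induction replaces it by an explicit verification that the linear system (tangency, load balance, support restriction) has only the trivial solution whenever the representing graph is a forest, using nothing beyond positivity of the $g_{k\alpha}$; this is more elementary, bypasses Corollary \ref{cor:face} entirely, introduces no null sets beyond the one already consumed by Lemma \ref{lem:graph}, and in effect supplies the general-position argument the paper leaves implicit. The bookkeeping point you flagged does hold: any player whose support contains the leaf $\alpha$ and has size at least two contributes at least one edge incident to $\alpha$ (whether $\alpha$ is that player's hub or a spoke of it), so a second such player would force the degree of $\alpha$ to be at least two, contradicting leafhood; hence all other users of $\alpha$ are singleton players whose entries of $z$ vanish by tangency, and the elimination goes through.
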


\begin{proof}
The basic idea of the proof is, essentially, a geometric one. Indeed, assume that there are two distinct equilibrial points, $p$ and $p'$, sitting at the interior of some $d$-dimensional face $\rho$ of $\strat$. By convexity, the linear segment spanned by $p$ and $p'$ will also belong to the Nash set $\eq$ which implies that this line segment must (a.s.) lie in the subspace $W$ of degenerate directions of the game.

In other words, we see that if there is not a unique Nash equilibrium, then the subspace $W$ of degenerate directions intersects nontrivially with a $d$-dimensional face $\rho$ whose interior contains an equilibrium. However, since $\dim(W) = KA-A$ (a.s.), Corollary \ref{cor:face} gives $\dim(\rho)+\dim(W) \leq KA - A + A - 1<KA$. On the other hand, it is well-known that two generic subspaces of a real vector space $V$ intersect nontrivially if and only if the sum of their dimensions exceeds $\dim(V)$, so, since $\strat$ is  embedded in $\R^{KA}$, we may conclude that $\rho$ and $W$ intersect trivially with probability $1$, a contradiction.
\end{proof}


\section{Convergence to Equilibrium}

Having determined the properties of the game's unique equilibrium point, our task in this section will be to present a decentralized learning scheme which allows users to converge to this equilibrium point (and to estimate the speed of this convergence).


Given that the structure of the game $\game$ does not adhere to the (multilinear) setting of Nash \cite{Na51},\footnote{Or even the continuous population models of \cite{Sa01}: there are no ``node-specific'' rewards in our problem like the ``phenotype-specific'' growth rates of evolutionary biology.} the usual theory of evolutionary ``random-matching'' games does not apply either. This leaves on a rather unclear position on how to proceed, but since players invariably want to increase their rewards and an increase in payoff is equivalent to a decrease in potential, we will begin by considering the directional derivatives of the potential function $\Phi$:
\begin{equation}
v_{k\alpha}(p) \equiv -\frac{\pd \Phi}{\pd p_{k\alpha}} = \frac{b_{\alpha}g_{k\alpha}}{\sigma_{\alpha}^{2} + \insum_{\ell} g_{\ell\alpha} p_{\ell\alpha}}.
\end{equation}
Clearly, if a player transmits with positive power to node $\alpha$, then he will be able to calculate the gradient $v_{k\alpha}(p)$ in terms of the observables $p_{k\alpha}$ (the user's power allocation), $g_{k\alpha}$ (his channel gain coefficients), and \attn{the spectral efficiency $u_{k\alpha}(p) = b_{\alpha}\log\left(1+g_{k\alpha}p_{k\alpha}(\sigma_{\alpha}^{2} + \sum_{\ell\neq k} g_{\ell\alpha}p_{\ell\alpha})\right)$ of (\ref{eq:payoff}) which user $k$ observes at node $\alpha$}.\footnote{\attn{Note that this is different from gradient techniques applied to the utility functions themselves, a practice which requires the utility functions to be known.}} As a result, any learning scheme which relies only on the $v_{k\alpha}$'s will be inherently distributed in the sense that it only requires information that is readily obtainable by the individual players.

With all this in mind, a particularly simple scheme to follow is that of the replicator dynamics \cite{We95} associated with the ``marginal payoffs'' $v_{k\alpha}$. More specifically, this means that the players update their power allocations according to the differential equation:
\begin{equation}
\label{eq:RD}
\frac{dp_{k\alpha}}{dt}
\txs= p_{k\alpha} \big(v_{k\alpha}(p(t)) - v_{k}(p(t))\big),
\end{equation}
where $v_{k}$ is just the user average $v_{k}(p) = P_{k}^{-1}\insum_{\beta} p_{k\beta} v_{k\beta}(p)$.

As usual, the rest points of (\ref{eq:RD}) are characterized by the (waterfilling) property that, for every pair of nodes $\alpha,\beta\in\supp(p)$ to which user $k$ allocates positive power, we will also have $v_{k\alpha}(p) = v_{k\beta}(p)$. Hence, comparing this to the KKT conditions (\ref{eq:KKT}), we immediately see that the Nash equilibria of $\game$ are stationary in the replicator equation (\ref{eq:RD}). This result is well-known in finite Nash games with multilinear payoffs \cite{FL98} and in continuous population games \cite{Sa01}, but the converse does not hold: for instance, every vertex of $\strat$ is stationary in (\ref{eq:RD}), so stationarity of (\ref{eq:RD}) does not imply equilibrium.

\smallskip

Nevertheless, {\em only} Nash equilibria can be attracting, and, in fact, they attract almost every replicator solution orbit:

\begin{theorem}
\label{thm:convergence}
Let $q\in\strat$ be the unique (a.s.) equilibrium of $\game$. Then, every solution orbit of the replicator dynamics (\ref{eq:RD}) which begins at finite Kullback-Leibler entropy from $q$ will converge to it.

Furthermore, even if the game does not admit a unique equilibrium, every interior trajectory still converges to a Nash equilibrium (and not merely to the Nash set of the game).
\end{theorem}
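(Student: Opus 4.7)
The plan is to build a Lyapunov function out of the Kullback-Leibler divergence from a Nash equilibrium. For any $q\in\eq$, set
$$H_q(p) = \insum_{k,\alpha} q_{k\alpha} \log\!\bigl(q_{k\alpha}/p_{k\alpha}\bigr),$$
with the convention $0\log 0 = 0$; this is finite exactly when the starting orbit lies at finite KL entropy from $q$. Differentiating along (\ref{eq:RD}) and using $\insum_\alpha q_{k\alpha} = \insum_\alpha p_{k\alpha} = P_k$, one computes
$$\frac{d}{dt} H_q(p(t)) = -\insum_{k,\alpha} q_{k\alpha}\bigl(v_{k\alpha}(p) - v_k(p)\bigr) = \insum_k \langle q_k - p_k, \nabla_k \Phi(p)\rangle,$$
where we used $v = -\nabla\Phi$. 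By convexity of $\Phi$ this last expression is at most $\Phi(q) - \Phi(p)$; moreover, since the KKT conditions (\ref{eq:KKT}) characterize $\eq$ as the minimizer set of the (convex) potential on $\strat$, we have $\Phi(q)\leq\Phi(p)$, so $\dot H_q \leq \Phi(q)-\Phi(p) \leq 0$.

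Three consequences flow from this inequality. First, $H_q(p(t))$ is non-increasing, so the orbit remains in a compact sublevel set and stays bounded away from the boundary components on which $q$ is supported. Second, a direct computation (completing the square in the definition of $v_k$) gives
$$\frac{d\Phi(p(t))}{dt} = -\insum_{k,\alpha} p_{k\alpha}\bigl(v_{k\alpha}(p) - v_k(p)\bigr)^{2} \leq 0,$$
so $\Phi(p(t))$ is itself monotone and converges to some limit $L\geq\Phi(q)$. Combining this with the integrated version of $\dot H_q\leq -(\Phi(p)-\Phi(q))$ yields $\int_0^\infty(\Phi(p(t))-\Phi(q))\,dt \leq H_q(p(0)) < \infty$, which forces $L=\Phi(q)=\min_\strat \Phi$. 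Third, since $\Phi$ attains its minimum exactly on $\eq$, every $\omega$-limit point of the trajectory lies in $\eq$. When the game admits a unique equilibrium, this collapses $\omega(p(0))$ to $\{q\}$, proving the first half of the theorem.

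For the second half, I would bootstrap by picking any $p^{*}\in\omega(p(0))$, which by the above is a Nash equilibrium. Since $p(t)$ is interior, $H_{p^{*}}(p(t))$ is well-defined, finite, and satisfies the same Lyapunov inequality $\dot H_{p^{*}} \leq \Phi(p^{*}) - \Phi(p) \leq 0$. Along any subsequence $t_n\to\infty$ with $p(t_n)\to p^{*}$, continuity of $H_{p^{*}}$ on the support of $p^{*}$ forces $H_{p^{*}}(p(t_n))\to 0$; monotonicity then propagates this to $H_{p^{*}}(p(t))\to 0$ along the entire trajectory. This convergence pins $p_{k\alpha}(t)\to p^{*}_{k\alpha}$ on each $\supp(p^{*}_k)$, and the conservation law $\insum_\alpha p_{k\alpha}(t) = P_k$ then sends the off-support components to zero, so $p(t)\to p^{*}$.

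The main obstacle is the pivotal Lyapunov inequality $\dot H_q \leq \Phi(q)-\Phi(p)$: verifying it honestly requires combining the specific algebraic form of the replicator field with both the convexity of the potential and the KKT characterization of $\eq$, and it is this inequality that simultaneously (i) keeps trajectories interior, (ii) drives $\Phi$ down to its minimum, and (iii) enables the bootstrap from $\omega$-limit equilibrium to pointwise convergence. The rest is a careful but standard LaSalle-type unpacking of monotonicity and continuity on the face of $\strat$ containing the limiting equilibrium.
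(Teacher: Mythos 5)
Your proof is correct and follows essentially the same route as the paper's: the relative entropy $H_{q}$ as Lyapunov function, the key inequality $\dot H_{q}\le \Phi(q)-\Phi(p)$ obtained from convexity of the potential, and the bootstrap through the family $\{H_{q'}\}_{q'\in\eq}$ of entropies to upgrade convergence to the Nash \emph{set} into convergence to a single equilibrium in the degenerate case. The only departure is minor: you show that $\omega$-limit points lie in $\eq$ via the integral bound $\int_{0}^{\infty}\bigl(\Phi(p(t))-\Phi(q)\bigr)\,dt\le H_{q}(p(0))<\infty$ combined with monotonicity of $\Phi$ along orbits, whereas the paper runs a LaSalle-type argument through the semiflow $\Theta(p,t)$; both are valid.
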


\begin{remark*}
Recall that the {\em Kullback-Leibler divergence} (or {\em relative entropy}) of $p$ with respect to $q$ is \cite{We95}:
\begin{equation}
H_{q}(p)
= \insum_{k} H_{q_{k}}(p_{k})
= \insum_{k,\alpha} q_{k\alpha} \log\left({q_{k\alpha}}\big/{p_{k\alpha}}\right).
\end{equation}
Clearly, $H_{q}(p)$ is finite if and only if $p_{k}$ allocates positive power $p_{k\alpha}>0$ to all nodes $\alpha\in\supp(q)$ which are present in $q_{k}$; more succinctly, the domain of $H_{q}$ consists of all power allocations which are absolutely continuous w.r.t. $q$.
\end{remark*}

\begin{figure*}
\label{fig:convergence}
\noindent\negspace
\subfigure[Global convergence to equilibrium]{%
\label{subfig:convirr}
\includegraphics[width=0.48\columnwidth]{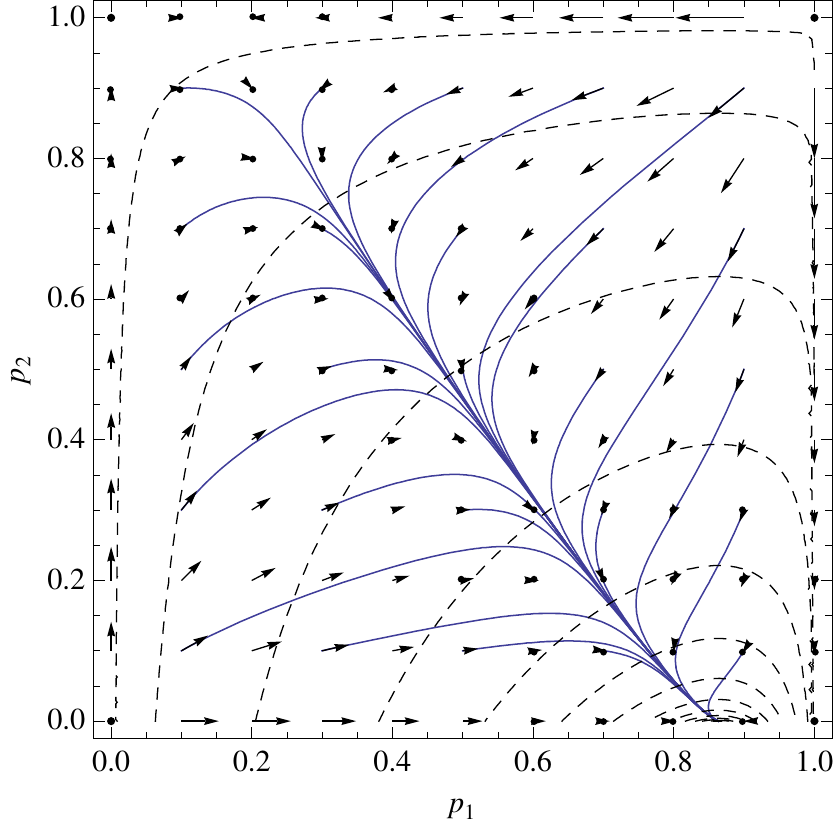}}
\hfill
\subfigure[Convergence in degenerate games]{
\label{subfig:convred}%
\includegraphics[width=0.48\columnwidth]{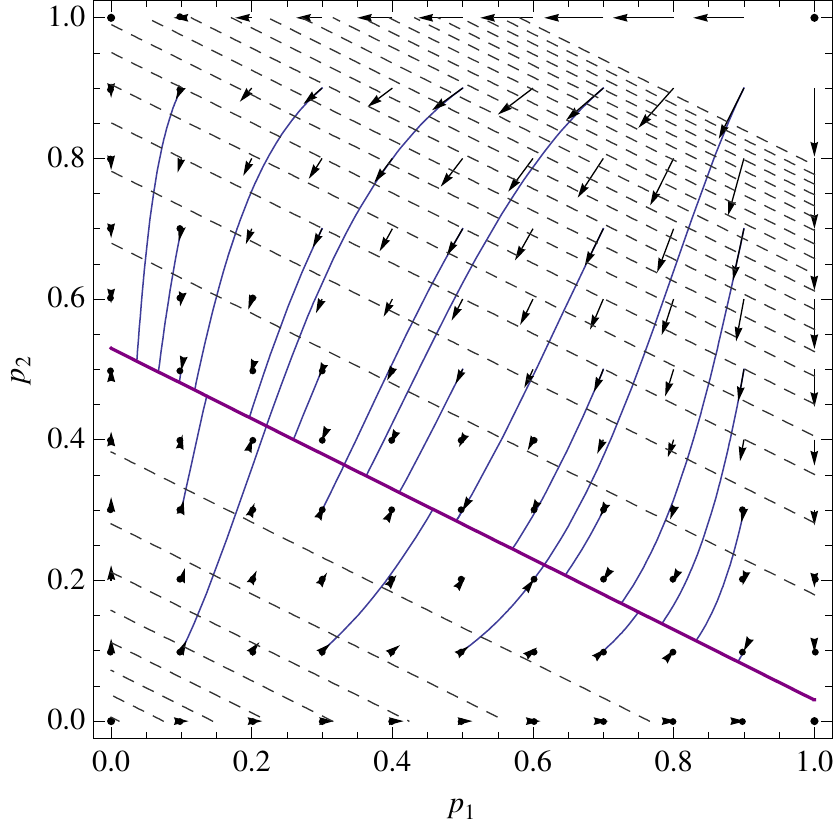}}
\caption{Convergence to equilibrium in 2x2 power allocation games (the dashed grey contours represent the level sets of the game's potential). If the game admits a unique equilibrium (which is almost always the case), then this equilibrium is (almost) globally attracting (Fig.~\ref{subfig:convirr}). However, even when the game has more than one equilibria (Fig.~\ref{subfig:convred}), every interior replicator trajectory converges to an equilibrium point.}
\end{figure*}

This convergence result (proved in Appendix \ref{apx:convergence}) is extremely powerful because it shows that the network's users will eventually settle down to a stable state which discourages unilateral deviations, even though they only have local information at their disposal. The only case that is left open in the above theorem is what happens if the initial K-L entropy of the solution orbit is infinite, i.e. if the users' initial power allocation does not support all of the nodes which are present in equilibrium. If this is the case, then the face-invariance property of the replicator dynamics ($p_{k\alpha}(t)=0$ iff $p_{k\alpha}=0$) will prevent the users from settling down to a Nash equilibrium. However, an easy analysis shows that if one takes the reduced game where each user only has access to the nodes to which he initially allocates positive power, then the users will actually converge to an equilibrium of this reduced game:
\attn{
\begin{proposition}
Let $p(0) = (p_{1}(0),\dotsc,p_{K}(0))$ be an initial power allocation profile in the game $\game$ and let $\act_{k}=\supp(p_{k}(0))\subseteq\act$. Then, if $\game^{0}$ is a reduced version of $\game$ which is played over $\strat\equiv\prod_{k} P_{k}\strat(\act_{k})$ with payoffs
\begin{equation}
u_{k}^{0}(p) = \sum_{k\in\act_{k}} b_{\alpha} \log\left(1 + \frac{g_{k\alpha}p_{k\alpha}}{\sigma_{\alpha}^{2} + \sum_{\ell\neq k} g_{\ell\alpha} p_{\ell\alpha}}\right),
\end{equation}
the replicator dynamics (\ref{eq:RD}) converge to the unique (a.s.) equilibrium of the reduced game $\game^{0}$.
\end{proposition}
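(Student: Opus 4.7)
The plan is to reduce the statement to a direct application of Theorem \ref{thm:convergence}, applied this time to the reduced game $\game^{0}$ rather than to $\game$. The key observation is that the replicator dynamics (\ref{eq:RD}) enjoy a strong face-invariance property: because every component satisfies $\dot p_{k\alpha} = p_{k\alpha}(v_{k\alpha}(p)-v_{k}(p))$, any coordinate that starts at zero stays at zero. Since $\act_{k}=\supp(p_{k}(0))$, the coordinates $p_{k\alpha}$ with $\alpha\notin\act_{k}$ are frozen at $0$ for all $t\geq0$, so the entire trajectory lies on the face $\strat^{0}\equiv\prod_{k}P_{k}\strat(\act_{k})$ of the original strategy space $\strat$.

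Next, I would verify \textendash\ this is an immediate calculation \textendash\ that the restriction of (\ref{eq:RD}) to $\strat^{0}$ coincides verbatim with the replicator dynamics of $\game^{0}$. Indeed, the marginal payoffs $v_{k\alpha}(p)$ depend on $p$ only through the active variables $\{p_{\ell\beta}:\beta\in\act_{\ell}\}$ (the remaining ones being identically zero), and the user averages $v_{k}(p)=P_{k}^{-1}\insum_{\beta}p_{k\beta}v_{k\beta}(p)$ truncate automatically because the vanishing components contribute nothing to the sum. Hence, viewed as a trajectory in $\strat^{0}$, $p(t)$ is exactly a replicator orbit of $\game^{0}$. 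The latter is itself a parallel MAC game of the form analyzed throughout the paper: users allocate powers over $\act_{k}$ with payoffs of the shape (\ref{eq:payoff}), and the exact potential is obtained simply by restricting (\ref{eq:potential}) to $\strat^{0}$. Consequently, both the a.s.\ uniqueness of the Nash equilibrium and Theorem \ref{thm:convergence} apply to $\game^{0}$ without modification. Finally, since $p(0)$ lies in the relative interior of $\strat^{0}$ by the very definition of $\act_{k}$, the Kullback-Leibler divergence from $p(0)$ to the reduced equilibrium $q^{0}$ is automatically finite, and Theorem \ref{thm:convergence} yields $p(t)\to q^{0}$.

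The main obstacle I anticipate is bookkeeping rather than substantive: one has to make sure that the almost-sure statements underlying Theorem \ref{thm:convergence} and the uniqueness result \textendash\ most notably the no-cycle Lemma \ref{lem:graph} \textendash\ still hold when the players are restricted to the node sets $\act_{k}$. This is essentially automatic, because shrinking $\act$ to $\act_{k}$ only drops channel coefficients from the probabilistic argument of (\ref{eq:chain}), leaving the surviving relation as a positive-codimension submanifold of the reduced parameter space; the continuous-distribution hypothesis on the $g_{k\alpha}$ is preserved throughout. No new proof ingredients are required beyond the machinery already developed for $\game$.
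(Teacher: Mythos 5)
Your argument is correct and is precisely the ``easy analysis'' the paper has in mind: face invariance confines the orbit to $\prod_{k}P_{k}\strat(\act_{k})$, the restricted dynamics are the replicator dynamics of the reduced parallel MAC game $\game^{0}$ (to which the a.s.\ uniqueness theorem and Theorem \ref{thm:convergence} apply verbatim, per the remark that the node sets need not be common to all users), and $H_{q^{0}}(p(0))<\infty$ since $p(0)$ is interior to the reduced face. The only point worth making explicit is that the a.s.\ qualifier survives taking a union over the finitely many possible reduced games, but this is immediate.
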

}

\section{Conclusions}

In this paper, we studied distributed power allocation in parallel multiple access channels, modeling e.g. the problem of uplink communication in networks that consist of wireless receivers that operate in orthogonal frequency bands. Despite the fact that these games are special cases of the more general framework of \cite{SPB08i-sp,SPB08ii-sp,SPB08-it,SPB08-jsac}, the sufficient conditions provided therein for uniqueness of equilibrium typically fail in our case. Nonetheless, we show that the game {\em does} admit a unique equilibrium by studying the properties of the game's exact potential function (and correcting a mistake of \cite{PBLD09} in the process). Furthermore, by introducing a distributed learning scheme based on the replicator dynamics of evolutionary game theory, we show that users converge to the game's unique (a.s.) equilibrium. This result carries significant applicational potential because it ensures convergence to equilibrium even in decentralized settings where the users only have access to local information (in our case, the channel coefficients $g_{k\alpha}$ and the spectral efficiencies $u_{k\alpha}$).

Since the parallel MAC game is a special case of the more general IC one, a natural question that arises is whether our analysis extends to this more general case as well. One immediate observation is that the convergence properties of the replicator dynamics are still valid in general convex potential games played over products of simplices, but since the MIMO MAC game is actually played over the polytope of non-negative definite channel matrices with constrained trace, it is not as easy to write a continuous-time equation there. Further directions to be explored include the speed of convergence of the replicator dynamics to equilibrium (which can be shown to be exponentially fast) and the robustness of the replicator dynamics under stochastic disturbances which reflect inaccuracies in the users' observations (e.g. of the channel coefficients $g_{k\alpha}$). However, a disicussion of these issues would take us too far afield (and well beyond the space limitations of this paper), so we prefer to postpone them for the future.




\appendix


\section{Proof of Convergence}
\label{apx:convergence}

This appendix is devoted to the proof of Theorem \ref{thm:convergence}. The basic idea will be to show that the replicator dynamics (\ref{eq:RD}) admit a Lyapunov function, i.e. a non-negative function $f:\strat\to\R$ with $f(p)=0$ if and only if $p=q$ and such that $\dot f (p) \equiv \insum_{k,\alpha} \pd f/\pd p_{k\alpha} \,\dot p_{k\alpha}<0$ for all $p\neq q$.

A particularly appealing candidate is the game's own potential function $\Phi$. Indeed, an easy differentiation of (\ref{eq:potential}) yields $\pd\Phi/\pd p_{k\alpha} = - \pd u_{k}/\pd p_{k\alpha}$, so we obtain:
\begin{flalign}
\frac{d\Phi}{dt}
&= -\insum_{k,\alpha} \frac{\pd u_{k\alpha}}{dp_{k\alpha}} \frac{dp_{k\alpha}}{dt}\notag\\
&= \insum_{k,\alpha} v_{k\alpha}(p(t)) p_{k\alpha}(t) \left(v_{k\alpha}(p(t)) - v_{k}(p(t))\right)\notag\\
&= \insum_{k} P_{k}\left[\insum_{\alpha}\frac{p_{k\alpha}(t)}{P_{k}} v_{k\alpha}^{2}(p(t)) - v_{k}^{2}(p(t))\right]\leq 0,
\end{flalign}
by Jensen's inequality (recall that $\insum_{\alpha} p_{k\alpha} = P_{k}$). Since this inequality is strict if all the $p_{k\alpha}$ are positive and $v_{k\alpha}\neq v_{k\beta}$ for $\alpha\neq\beta$, this proves convergence to equilibrium when the game only has a unique equilibrium and the game has no degeneracy.

To get the more general case (and, also, for independent interest), it is much more instructive to consider as a Lyapunov candidate the relative entropy $H_{q}$ itself. Indeed, a simple differentiation gives:
\begin{equation}
\frac{d H_{q}}{dt}
= -\insum_{k,\alpha} \frac{q_{k\alpha}}{p_{k\alpha}(t)} \frac{d p_{k\alpha}}{dt}
=-\insum_{k,\alpha} q_{k\alpha} \big(v_{k\alpha}(p(t)) - v_{k}(p(t))\big),
\end{equation}
and, after rearranging the last term, we get:
\begin{equation}
\frac{dH_{q}}{dt} = \insum_{k,\alpha}(p_{k\alpha}(t) - q_{k\alpha})\,v_{k\alpha}(p(t)) \equiv - L_{q}(p(t)),
\end{equation}
where
\begin{equation}
\label{eq:evindex}
L_{q}(p) = -\insum_{k,\alpha} (p_{k\alpha} - q_{k\alpha}) \,v_{k\alpha}(p).
\end{equation}

%
We are thus left to show that $L_{q}(p(t))>0$ and, to that end, the key observation is that $L_{q}$ may be interpreted as a directional derivative of $\Phi$. So, let us set $f(\theta) = \Phi(q+\theta z)$, where $\theta\geq 0$ and $z$ is a vector in the (solid) {\em tangent cone} $\cone_{q}\strat$ of $\strat$ at $q$:
\begin{equation}
\label{eq:hfrelation}
\cone_{q}\strat \equiv \{z\in Z: z_{k\alpha}\geq0 \text{ for all } \alpha \text{ with } q_{k\alpha}=0\},
\end{equation}
i.e. $\cone_{q}\strat$ consists of those tangent directions $z\in Z$ which point towards the interior of $\strat$ (recall that $q$ might lie on the boundary of $\strat$). Clearly then, (\ref{eq:evindex}) may be rewritten as:
\begin{equation}
\label{eq:fprime}
f'(\theta) = \insum_{k,\alpha} \left.\frac{\pd\Phi}{\pd p_{k\alpha}}\right|_{q+\theta z} z_{k\alpha} = \theta^{-1} L_{q}(q+\theta z), 
\end{equation}
for all sufficiently small $\theta>0$ such that $q+\theta z\in\strat$.

However, since $q$ is the unique minimum of $\Phi$ (a.s.), $f(\theta)$ will be convex along any direction $z\in\cone_{q}\strat$, so that $\theta f'(\theta)\geq f(\theta) - f(0)$. Hence, if $p=q+\theta z$ is an arbitrary point of $\strat$, equations (\ref{eq:hfrelation}) and (\ref{eq:fprime}) yield the growth estimate:
\begin{equation}
\label{eq:evindexestimate}
L_{q}(p) = \theta f'(\theta) \geq f(\theta) - f(0) = \Phi(p) - \Phi(q).
\end{equation}
This last estimate shows that $L_{q}(p)\geq0$ for all $p\neq q$, thus concluding our proof of Theorem \ref{thm:convergence} for the non-degenerate case (note that then $\Phi(p) - \Phi(q)>0$ for all $q\neq p$).

To tackle the degenerate case, a semi-definite Lyapunov function (such as the game's potential $\Phi$ or the relative entropy $H_{q}$) is not enough because it ensures convergence to the set of minimum points and not to an actual point.
Clearly, the replicator dynamics in degenerate games might, in principle, exhibit phenomena of this kind. However, there is much more at work in (\ref{eq:RD}) than a single semi-definite Lyapunov function: there exists a whole \emph{family} of such functions, one for each equilibrium $q$.\footnote{This is also the reason that the relative entropy is a more suitable Lyapunov candidate: the potential has the same value at the {\em entire} Nash set of the game, while the relative entropy with respect to a point only vanishes at the point itself.}

To take advantage of this, it will be useful to shift our point of view to the {\em evolution function} $\Theta(p,t)$ of the dynamics (\ref{eq:RD}) which describes the solution trajectory that starts at $p$ at time $t=0$ and which satisfies the consistency condition:
\begin{equation}
\Theta(p,t+s) = \Theta(\Theta(p,t),s) \text{ for all $t,s\geq0$ and $p\in\strat$.}
\end{equation}
So, fix some initial condition $p\in\Int(\strat)$ (or, more generally, $p\in\strat_{q}$ where $\strat_{q}$ is the domain of the relative entropy function $H_{q}$) and let $p(t)=\Theta(x,t)$ be the corresponding solution orbit. If $q\in\eq$ is Nash, then, in view of the above discussion, the function $V_{q}(t) \equiv H_{q}(\Theta(p,t))$ will be decreasing (though, perhaps, not strictly so) and will converge to some $m\geq0$ as $t\to\infty$. It  thus follows that $p(t)$ converges itself to the level set $H^{-1}_{q}(m)$.

Suppose now that there exists some increasing sequence of times $t_{n}\to\infty$ such that $p_{n}\equiv p(t_{n})$ does not converge to the Nash set $\eq\equiv\eq(\game)$. By compactness of $\strat$ (and by descending to a subsequence if necessary), we may assume that $p_{n}=\Theta(p,t_{n})$ converges to some $p^{*}\notin \eq$ (but necessarily in $H^{-1}_{q}(m)$). Hence, for any $t>0$:
\begin{equation}
H_{q}(\Theta(p,t_{n}+t))
= H_{q}(\Theta(\Theta(p,t_{n}),t))
\to H_{q}(\Theta(p^{*},t))
< H_{q}(p^{*}) = m
\end{equation}
where the (strict) inequality stems from the fact that $\dot H_{q}<0$ outside $\eq$. On the other hand, $H_{q}(\theta(p,t_{n}+t))=V_{q}(t_{n}+t)\to m$, a contradiction.

Since the sequence $t_{n}$ was arbitrary, this shows that $p(t)$ converges to the set $\eq$. So, let $q'$ be a limit point of $p(t)$ with $p(t'_{n})\to q'$ for some sequence of times $t_{n}'\to\infty$. Then, $V_{q'}(t_{n}')=H_{q'}(p(t_{n}'))$ will converge to zero and, with $V_{q'}$ decreasing, we will have $\lim_{t\to\infty} V_{q'}(t) =0$ as well. Seeing as $H_{q'}$ only vanishes at $q'$, we conclude that $p(t)\to q'$, i.e. every interior trajectory converges to equilibrium.

\begin{figure}
\centering
\begin{tikzpicture}[scale=1.25]
\coordinate (A) at (-1.5,0);
\coordinate (B) at (0,2.598);
\coordinate (C) at (1.5,0);
\coordinate (P1) at ($ 4/5*(A) + 1/5*(B)$);
\coordinate (P2) at ($ 2/5*(B) + 3/5*(C)$);
\coordinate (Q) at ($1/3*(P1) + 2/3*(P2)$);
\draw (A)--(B);
\draw (B)--(C);
\draw (C)--(A);
\draw (P1)--(P2);
\draw[dashed, rotate=30] (Q) ellipse (10pt and 15pt);
\node [fill=black, shape =circle, inner sep=1pt, label=-45: $q$] at (Q) {};
\node at ($ 0.9*(B) + 0.25*(C)$) {$\strat$};
\node [inner sep=-2pt, label=-90: $\eq$] at ($ 0.8*(P1) + 0.2*(P2)$) {};
\node [inner sep=0pt] at (0,1.55) {$H^{-1}_{q}(m)$};
\end{tikzpicture}
\caption{The sets in the proof of Theorem \ref{thm:convergence}.}
\label{fig:convproof}
\end{figure}
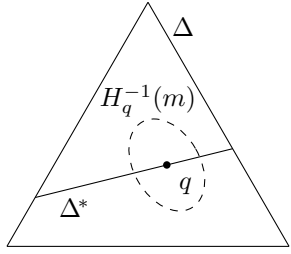

\bibliographystyle{ieeetran}
\bibliography{IEEEabrv,Bibliography,biblio_27-oct-2010}

\end{document}